\newtheorem{lemma}{Lemma}
\newtheorem{theorem}{Theorem}
\newtheorem{corollary}{Corollary}
\newtheorem{definition}{Definition}
\newtheorem{conjecture}{Conjecture}
\newcommand{\XSays}[3]{{\color{#2}
      {$\rule[-0.12cm]{0.2in}{0.5cm}$\fbox{\tt
            #1:} }%
      \itshape #3
      \marginpar{\color{#2}\tt #1}%
      \def\comment{#3}\def\empty{}\ifx\comment\empty\else
      {$\rule[0.1cm]{0.3in}{0.1cm}$\fbox{\tt
            end}$\rule[0.1cm]{0.3in}{0.1cm}$} \fi
   }%
}
\title{No acute tetrahedron is an 8-reptile}
\author{Herman Haverkort\thanks{Dept. of Mathematics and Computer Science, Eindhoven University of Technology, the Netherlands,\hfill\break cs.herman@haverkort.net}}
\date{}
\begin{document}
\maketitle

\begin{abstract}
An $r$-gentiling is a dissection of a shape into $r \geq 2$ parts which are all similar to the original shape. An $r$-reptiling is an $r$-gentiling of which all parts are mutually congruent. The complete characterization of all reptile tetrahedra has been a long-standing open problem. This note concerns acute tetrahedra in particular. We find that no acute tetrahedron is an $r$-gentile or $r$-reptile for any $r < 10$. The proof is based on showing that no acute spherical diangle can be dissected into less than ten acute spherical triangles.
\end{abstract}

\section*{Introduction}

Let $T$ be a closed set of points in Euclidean space with a non-empty interior. We call $T$ an \emph{$r$-gentile} if $T$ admits an \emph{$r$-gentiling}, that is, a subdivision of $T$ into $r \geq 2$ sets (\emph{tiles}) $T_1,...,T_r$, such that each of the sets $T_1,...,T_r$ is similar to $T$. In other words, $T$ is an $r$-gentile if we can tile it with $r$ smaller copies of itself. This generalizes the concept of \emph{reptiles}, coined by Golomb~\cite{golomb}: a set $T$ is an \emph{$r$-reptile} if $T$ admits an \emph{$r$-reptiling}, that is, a subdivision of $T$ into $r \geq 2$ sets $T_1,...,T_r$, such that each of the sets $T_1,...,T_r$ is similar to $T$ \emph{and all sets $T_1,...,T_r$ are mutually congruent under translation, rotation and/or reflection.} In other words, $T$ is an $r$-reptile if we can tile it with $r$ equally large, possibly reflected, smaller copies of itself. Interest in reptile tetrahedra (or triangles, for that matter) exists, among other reasons, because of their application in meshes for scientific computing~\cite{bader,Liu}. In this realm techniques such as reptile-based stack-and-stream are well-developed in two dimensions, but three-dimensional space poses great challenges~\cite{bader}.

It is known what triangles are $r$-reptiles~\cite{snover} and $r$-gentiles~\cite{freese,kaiser} for what $r$. However, for tetrahedra the situation is much less clear; in fact the identification of reptile and gentile tetrahedra and, even more general, of tetrahedra that tile space, has been a long-standing open problem~\cite{senechal}. 

The regular tetrahedron does \emph{not} tile space, as its dihedral angles are $\arccos(1/3)$, which is larger than $2\pi/6$ but slightly smaller than $2\pi/5$, so that no number of regular tetrahedra can fill the space around a common edge. Goldberg described all \emph{known} tetrahedra that do tile space~\cite{goldberg}. Delgado Friedrichs and Huson characterize all tetrahedra that produce \emph{tile-transitive} tilings~\cite{delgado}, but to the best of my knowledge, without the restriction to tile-transitive tilings the problem of identifying all space-filing tetrahedra is still open.

The reptile tetrahedra must be a subset of the tetrahedra that tile space. Matou\v sek and Safernov\'a argued that $r$-reptilings with tetrahedra exist if and only if $r$ is a cube number~\cite{safernova}. In particular, it is known that all so-called \emph{Hill tetrahedra} (attributed to Hill~\cite{Hill} by Hertel~\cite{Hertel} and Matou\v sek and Safernov\'a~\cite{safernova}) are 8-reptiles. It has been conjectured that the Hill tetrahedra are the only reptile tetrahedra~\cite{Hertel}, but this conjecture is false: Sommerville found two non-Hill tetrahedra that tile three-dimensional space~\cite{Sommerville} and which were recognized as 8-reptiles by Liu and Joe~\cite{Liu}. To the best of my knowledge, the Hill tetrahedra and the two non-Hill tetrahedra from Liu and Joe are the only tetrahedra known to be reptiles, but there might be others. This paper provides a small contribution to the answer to the question: exactly what tetrahedra are reptiles?

In mesh construction applications one typically needs to enforce certain quality constraints on the mesh elements. This has motivated studies into \emph{acute tetrahedra}~\cite{ungor}:

\begin{definition}
A tetrahedron is \emph{acute} if each pair of its facets has a dihedral angle strictly less than $\pi/2$.
\end{definition}

All facets of an acute tetrahedron are acute triangles themselves (Eppstein et al.~\cite{ungor}, Lemma~2). The Hill tetrahedra, as well as the two non-Hill tetrahedra from Liu and Joe, all have right dihedral angles\footnote{One of the non-Hill tetrahedra can be given (modulo similarity transformations) by $A = (-1,0,0)$, $B = (0,1,0)$, $C = (1,0,0)$, $D = (0,0,\sqrt{1/2})$; the second type of non-Hill tetrahedron is obtained by cutting the first type along the $yz$-plane. Both types have right dihedral angles along the $x$-axis. Any Hill tetrahedron can be described as the convex hull of four vertices $A = 0$, $B = v_1$, $C = v_1+v_2$ and $D = v_1+v_2+v_3$, such that the vectors $v_1$, $v_2$ and $v_3$ have the same length and such that the angle between each pair of these vectors is the same, say $\alpha$~\cite{safernova}. For ease of notation, assume that the tetrahedron is scaled, rotated and reflected such that $v_1$, $v_2$ and $v_3$ have length $\sqrt 2$, the vertex $C = v_1 + v_2$ lies on the positive $x$-axis, and the vertex $B = v_1$ lies in the first quadrant of the $xy$-plane.
We use $t$ to denote $\cos\alpha$. Note that we have $t < 1$, otherwise we would have $\alpha = 0$, all vertices would lie on a single line, and they would not be the vertices of a tetrahedron. The condition on the angles of the vectors can now be written as $v_1 \cdot v_2 = v_1 \cdot v_3 = v_2 \cdot v_3 = 2t$.
Thus we must have $v_1 = (a, b, 0)$ and $v_2 = (a, -b, 0)$ with $a = \sqrt{1 + t}$ and $b = \sqrt{1 - t} > 0$, so that indeed, $||v_1|| = ||v_2|| = \sqrt{a^2 + b^2} = \sqrt 2$ and $v_1 \cdot v_2 = a^2 - b^2 = 2t$. The vector $v_3 = (x, y, z)$ must now satisfy $v_1 \cdot v_3 = v_2 \cdot v_3 \Leftrightarrow ax + by = ax - by$, which, given $b \neq 0$, solves to $y = 0$, and we get $D = v_1+v_2+v_3 = (2a+x, 0, z)$. Thus, the face $ABC$ lies in the $xy$-plane and the face $ACD$ lies in the $xz$-plane, and these faces meet at a right dihedral angle along the $x$-axis.}.
Thus, no acute reptile tetrahedra are known.

\section*{Results}

In this note we will prove the following statement, which may serve as evidence that acute reptile tetrahedra are probably hard to find, if they exist at all:

\begin{theorem}\label{thm:main}
Let $T$ be an acute tetrahedron subdivided into $r \geq 2$ acute tetrahedra $T_1,...,T_r$. If the diameter (longest edge) of each tetrahedron $T_i$ is smaller than the diameter (longest edge) of $T$, then $r \geq 10$.
\end{theorem}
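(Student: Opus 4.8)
The plan is to reduce the three-dimensional statement to a two-dimensional one about dissections of a spherical diangle, exactly as announced in the abstract. Fix a longest edge $e = ab$ of $T$, of length $\ell$; since every tile $T_i$ has diameter $< \ell$, no tile edge is as long as $e$, so $e$ is not an edge of any single tile and the subdivision must cut it. First I would analyse the local picture at a point $p$ in the relative interior of $e$. Because $T$ is acute, the dihedral angle $\theta$ of $T$ along $e$ satisfies $\theta < \pi/2$, so the set of directions pointing into $T$ from $p$ is a spherical diangle (lune) $D$ of width $\theta < \pi/2$, whose two poles point along $\pm e$. A short rigidity observation does the crucial work: a lune of width less than $\pi$ contains exactly one antipodal pair of directions, namely its poles. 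Consequently every tile edge through $p$ must be collinear with $e$, and every interior tile facet through $p$ must contain the line of $e$. Thus each tile meeting $p$ contributes to the partition of $D$ either a narrower sub-lune (if $p$ lies in the relative interior of an edge of the tile, necessarily running along $e$) or an acute spherical triangle (if $p$ is a vertex of the tile, the triangle being the vertex link of that acute tetrahedron).

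Next I would assemble these local partitions, as $p$ ranges over $e$, into a single dissection of the diangle $D$ into acute spherical triangles. The tiles whose edges run along $e$ subdivide $e$ into at least two subintervals, each shorter than $\ell$; contracting these sub-lunes leaves the vertex-link triangles glued along $e$ into a genuine dissection of $D$ into acute spherical triangles, with each triangle injecting into a distinct tile $T_i$. (Making this gluing rigorous -- tracking how the combinatorics of the partition changes along $e$, and checking that the assembled object is a legitimate dissection of a single lune -- is one delicate point; in the fortunate case where some cross-section of $e$ is already dissected purely into triangles, the assembly is unnecessary.) Granting this, the number of tiles is at least the number of triangles in the dissection, so it suffices to prove the announced lemma: an acute spherical diangle of width $< \pi/2$ cannot be dissected into fewer than nine acute spherical triangles.

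For that lemma I would argue by Euler's formula together with the angle constraints forced by acuteness. Writing $N$ for the number of triangles, $V_{\mathit{int}}$ for the number of interior vertices and $V_{\mathit{bd}}$ for the number of boundary vertices of the dissection, the angle-sum identity (equivalently Gauss--Bonnet: each triangle's angle excess equals its area, and the areas sum to the lune area $2\theta$) yields
\[
N = 2V_{\mathit{int}} + V_{\mathit{bd}} - 2 .
\]
Acuteness supplies vertex-degree lower bounds: since every triangle angle is $< \pi/2$, at least five triangles meet at each interior vertex (where angles sum to $2\pi$) and at least three at each vertex interior to a side (where they sum to $\pi$, the sides being great-circle geodesics); moreover, since an acute spherical triangle has all sides $< \pi/2$ while each side of the lune is a semicircle of length $\pi$, each side carries at least two interior vertices, so $V_{\mathit{bd}} \ge 6$. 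Counting triangle-corners by vertex degree, $3N = \sum_v \deg(v)$, and substituting the degree bounds together with the identity above forces $V_{\mathit{int}} \ge 2$ and hence $N \ge 8$. The hard part -- and the main obstacle of the whole proof -- is to exclude the extremal value $N = 8$: this forces $V_{\mathit{int}} = 2$, $V_{\mathit{bd}} = 6$ with all degree bounds simultaneously tight (degree-one poles, degree-three side vertices, degree-five interior vertices), and I would rule this rigid configuration out by showing that the side lengths and angles it forces cannot all respect the strict acuteness constraints at once. Combining $N \ge 9$ with the reduction gives $r \ge 9$.
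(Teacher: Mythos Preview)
Your strategy—reduce to a dissection of an acute spherical lune and then count via Euler/degree bounds—is exactly the paper's, but two of your three steps take detours or contain gaps. For the reduction, no assembly along $e$ is needed: simply choose one point $v$ on $e$ that is already a \emph{vertex} of some tile (such $v$ exists since no tile spans $e$); the link at $v$ is then a dissection of $D$ into acute triangles and possibly acute diangles, with at least one triangle guaranteed. Your antipodal-pair observation is precisely what rules out any other local picture. To eliminate the diangles, the paper does not glue along $e$ but instead passes to a minimum-face valid dissection over all acute lunes: removing a diangular face leaves a valid dissection of a narrower lune with strictly fewer faces, so the minimum consists only of triangles. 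In the counting, your identity $N = 2V_{\mathit{int}} + V_{\mathit{bd}} - 2$ and the bound ``$\geq 5$ corners at each interior vertex'' both fail at \emph{hanging} interior vertices (those lying in the relative interior of some triangle's side, where corner angles sum only to $\pi$); the paper tracks full versus hanging interior vertices separately and gets $N = 2f + h + s$ with $f\geq 2$, $s\geq 4$. The extremal case happens to force $h=0$, so your conclusion $N \geq 8$ is recoverable, but the argument as written is not correct.

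For the exclusion of $N=8$, no metric argument is needed—the combinatorics alone suffices, and this is the cleanest part of the paper's proof. With exactly two interior vertices each incident on five faces and at most eight faces total, the two interior vertices must share exactly two of their incident faces; this pins down the incidence structure uniquely. In that structure four of the six boundary vertices are incident on only one interior edge, hence on only three edges total, equivalently only two triangle corners. At most two of these four can be the poles, so at least two are side vertices—contradicting your own lower bound of three triangle corners at each side vertex. Your proposed metric attack (``side lengths and angles \ldots\ cannot all respect strict acuteness'') is therefore unnecessary and, as stated, is not yet a proof.
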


In particular we get:
\begin{corollary}\label{cor:gentile}
No acute tetrahedron is an $r$-gentile for any $r < 10$.
\end{corollary}

With the result from Matou\v sek and Safernov\'a that $r$-reptile tetrahedra can only exist when $r$
is a cube number~\cite{safernova}, we get:

\begin{corollary}\label{cor:reptile}
No acute tetrahedron is an $r$-reptile for any $r < 27$.
\end{corollary}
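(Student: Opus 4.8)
The plan is to deduce this corollary directly from Theorem~\ref{thm:main} together with the restriction, established by Matou\v sek and Safernov\'a~\cite{safernova}, that the number of tiles in a tetrahedral reptiling must be a perfect cube. A reptiling is in particular a gentiling, so one could also route the argument through Corollary~\ref{cor:gentile}; however, I prefer to invoke Theorem~\ref{thm:main} directly, since the reptile hypothesis furnishes exactly the diameter condition that the theorem requires. I would argue by contradiction: suppose some acute tetrahedron $T$ admits an $r$-reptiling with $2 \le r < 27$, dissecting $T$ into $r$ mutually congruent tiles $T_1,\dots,T_r$, each similar to $T$.

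First I would note that, being congruent to one another and similar to $T$, each tile $T_i$ is itself an acute tetrahedron, so the tiles $T_1,\dots,T_r$ are acute as Theorem~\ref{thm:main} demands. Next I would verify the diameter hypothesis. Because the $r$ congruent tiles partition $T$, each tile has volume $\mathrm{vol}(T)/r$; since volume scales as the cube of linear size, the common similarity ratio between a tile and $T$ equals $r^{-1/3}$. For $r \ge 2$ this ratio is strictly less than $1$, so the longest edge of each $T_i$ is strictly shorter than the longest edge of $T$. Thus the hypotheses of Theorem~\ref{thm:main} are satisfied, and we conclude $r \ge 9$.

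Finally I would combine $r \ge 9$ with the cube-number constraint. Among integers $r \ge 2$ the perfect cubes are $8, 27, 64, \dots$, and the only one lying in the range $2 \le r < 27$ is $8$; but $8 < 9$ contradicts $r \ge 9$. Hence no value of $r$ with $2 \le r < 27$ can occur, which is the claim. Once the two ingredient results are in hand, no genuinely hard step remains; the only point requiring any care is the elementary scaling computation certifying that each tile has smaller diameter, and the entire substantive difficulty resides in Theorem~\ref{thm:main} rather than in this corollary.
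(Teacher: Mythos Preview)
Your proposal is correct and follows essentially the same approach as the paper: combine the lower bound $r \ge 9$ from Theorem~\ref{thm:main} with the Matou\v sek--Safernov\'a cube-number constraint to rule out all $r < 27$. The only cosmetic difference is that the paper routes the argument through Corollary~\ref{cor:gentile} (reptilings being gentilings), whereas you invoke Theorem~\ref{thm:main} directly and spell out the diameter check via the volume-scaling computation; both routes are equivalent.
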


\section*{The proof}

Note that if a tetrahedron $T$ is subdivided into tetrahedra $T_1,...,T_r$ with smaller diameter than $T$, then at least one tetrahedron $T_i$, for some $i \in \{1,...,r\}$, must have a vertex $v$ on the longest edge of $T$. For the proof of Theorem~\ref{thm:main} we analyse ${\cal S}_v$, the subdivision of an infinitesimal sphere around $v$ that is induced by the facets of $T$ and $T_1,...,T_r$. In such a subdivision, we find:\begin{itemize}
\item faces: each face is either a spherical triangle, corresponding to a tetrahedron $T_i$ of which $v$ is a vertex, or a spherical diangle (also called lune), corresponding to a tetrahedron that has $v$ on the interior of an edge;
\item edges: the edges of ${\cal S}_v$ are segments of great circles and correspond to facets of $T_1,...,T_r$ that contain $v$; the angle between two adjacent edges on a face of ${\cal S}_v$ corresponds to the dihedral angle of the corresponding facets of a tetrahedron $T_i$.
\item vertices: each vertex of ${\cal S}_v$ corresponds to an edge of a tetrahedron $T_i$ that contains $v$.
\end{itemize}
Thus, ${\cal S}_v$ consists of a spherical diangle $D$ corresponding to $T$, subdivided into a number of spherical triangles, and possibly some spherical diangles, that correspond to the tetrahedra from $T_1,...,T_r$ that touch $v$. Below we will see that ${\cal S}_v$ must contain at least ten faces (not counting the outer face, that is, the complement of $D$), which proves Theorem~\ref{thm:main}.

In what follows, when we talk about diangles and triangles, we will mean \emph{acute}, \emph{spherical} diangles and \emph{acute}, \emph{spherical} triangles on a sphere with radius~1. Note that the faces are diangles or triangles in the geometric sense, but they may have more than two or three vertices on their boundary. More precisely, a diangle or triangle has, respectively, exactly two or three vertices, called \emph{corners}, where its boundary has an acute angle, and possibly a number of other vertices where its boundary has a straight angle. A chain of edges of a diangle or triangle from one corner to the next is called a \emph{side}. Note that ${\cal S}_v$ contains at least one triangle, since $v$ is a vertex of at least one tetrahedron $T_i$. Therefore, in what follows we consider a subdivision ${\cal S}$ of a diangle $D$ into a number of diangles and triangles, among which at least one triangle. We call such subdivisions \emph{valid}. Henceforth, we will assume that ${\cal S}$ has the smallest number of faces out of all possible valid subdivisions of all possible diangles $D$. Our goal is now to prove that ${\cal S}$ contains at least ten faces.

\begin{lemma}\label{lem:alltriangles}
Each face of ${\cal S}$ is a triangle.
\end{lemma}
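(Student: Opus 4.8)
The plan is to argue by contradiction using the minimality of ${\cal S}$: if some face were a diangle, I would extract from ${\cal S}$ a valid subdivision of a (possibly smaller) diangle with strictly fewer faces, contradicting the choice of ${\cal S}$ as a minimal valid subdivision over all diangles.

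First I would pin down the geometry of diangle faces. The two corners of any spherical diangle (lune) are the two antipodal points in which its bounding great circles meet. Since $D$ is acute, its opening angle is less than $\pi/2 < \pi$, and a short calculation shows that the only pair of antipodal points contained in the closed diangle $\overline{D}$ is its own pair of poles, say $N$ and $S$. Hence if $D'$ is any diangle face of ${\cal S}$, its corners --- being vertices of ${\cal S}$ and therefore points of $\overline{D}$ --- must coincide with $N$ and $S$. Consequently each of the two sides of $D'$ is a chain of edges running from $N$ to $S$ along a meridian (great semicircle), and $D$ together with all its diangle faces are concentric lunes sharing the axis $NS$.

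Next I would remove such a diangle $D'$ and close the gap. Let $\beta$ denote the (acute) opening angle of $D'$; since ${\cal S}$ also contains a triangle, which occupies part of $D$, we have $0 < \beta < \alpha_D$, where $\alpha_D$ is the opening angle of $D$. Cutting ${\cal S}$ along the two meridian sides of $D'$ splits the rest of the subdivision into two pieces; rotating one piece about the axis $NS$ through angle $\beta$ glues its boundary meridian onto the other, deleting the strip $D'$. Because rotation about $NS$ is an isometry of the sphere, every surviving face keeps its shape, so all faces remain acute triangles or acute diangles, and straight-angle vertices created along the merged meridian are permissible. The result is a subdivision of a lune of opening angle $\alpha_D - \beta \in (0,\pi/2)$, hence a valid acute diangle; it still contains a triangle (the one guaranteed in ${\cal S}$, which is not $D'$), and it has exactly one fewer face than ${\cal S}$.

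The main obstacle is the bookkeeping in this gluing step: I must check that the two pieces fit together into a genuine, non-overlapping subdivision of the smaller lune, that the edge and vertex incidences along the seam are consistent (vertices of one side becoming straight-angle vertices on the faces of the other), and that no face degenerates. Granting this, the new subdivision is valid and smaller, contradicting minimality; hence ${\cal S}$ has no diangle face, and every face is a triangle. The antipodal-pair observation is the conceptual crux, since it forces all diangles to share one axis and thereby turns the removal of $D'$ into an isometric rotate-and-glue operation that automatically preserves acuteness.
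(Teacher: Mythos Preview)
Your argument is correct and rests on the same key observation as the paper's: any diangle face of ${\cal S}$ must have the poles $N,S$ of $D$ as its corners, since those are the only antipodal pair in $\overline{D}$. The divergence is in how you extract a smaller valid subdivision. You delete $D'$ and then rotate one of the two remaining pieces about the axis $NS$ by the opening angle $\beta$ to glue them into a single lune of angle $\alpha_D-\beta$. The paper instead notices that each of the two pieces left after removing $D'$ is \emph{already} a diangle with corners $N,S$, so whichever piece contains a triangle is itself a valid subdivision with strictly fewer faces---no gluing required. Your rotate-and-glue step is sound, but the bookkeeping you rightly worry about (matching vertices along the seam, inserting straight-angle vertices on adjacent faces, non-degeneracy) is entirely avoidable: simply discard one piece rather than merging the two. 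The paper's route is shorter and sidesteps exactly the obstacle you identify.
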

\begin{proof}
If ${\cal S}$ would have any diangular face $F$, it must have the same corners as $D$, because the corners of any diangle must be an antipodal pair and there is only one antipodal pair within $D$. The removal of $F$ would separate ${\cal S}$ into at most two diangular components with the same corners as $D$. By construction, at least one of these components contains a triangle. That component would then constitute a valid subdivision that has fewer faces than ${\cal S}$, contradicting our choice of ${\cal S}$.
\end{proof}

In ${\cal S}$, we distinguish \emph{boundary vertices} (vertices on the boundary of $D$) and \emph{interior vertices} (vertices in the interior of $D$). Among the boundary vertices, we distinguish \emph{poles} (the corners of $D$) and \emph{side vertices} (the remaining boundary vertices). Among the interior vertices we distinguish \emph{full vertices} and \emph{hanging vertices}: a vertex $v$ is a full vertex if it is a corner of each face incident on $v$; a vertex $v$ is a hanging vertex if it is a non-corner vertex of one of the faces incident on $v$.

We will now derive a few properties of ${\cal S}$ from the acuteness of its angles.

\begin{lemma}\label{lem:shortedges}
Each side of each face of ${\cal S}$ has length strictly less than $\pi/2$.
\end{lemma}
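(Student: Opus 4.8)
The plan is to reduce the claim about chains of edges to a statement about individual acute spherical triangles, and then to invoke spherical trigonometry. First I would observe that a \emph{side} of a face is in fact a single great-circle arc: at every non-corner vertex along the chain the boundary makes a straight angle, and on the sphere two great-circle arcs meeting at a straight angle are collinear and hence lie on a common great circle. Consequently a side running from one corner to the next is a single geodesic arc, and its length equals the length of the corresponding side of the spherical triangle bounded by the three sides of the face. Thus it suffices to prove that every side of an acute spherical triangle on the unit sphere has length strictly less than $\pi/2$.

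For this I would use the polar (angle) form of the spherical law of cosines. Writing $A, B, C$ for the corner angles of a face and $a, b, c$ for the lengths of the opposite sides, the identity
\[
\cos a = \frac{\cos A + \cos B \cos C}{\sin B \sin C}
\]
holds. By Lemma~\ref{lem:alltriangles} each face is an acute triangle, so all of $A, B, C$ lie strictly in $(0, \pi/2)$; hence $\cos A, \cos B, \cos C > 0$ and $\sin B, \sin C > 0$, so both the numerator and the denominator are positive. This gives $\cos a > 0$ and therefore $a < \pi/2$. Applying the same argument to $b$ and $c$ shows that all three sides are strictly shorter than $\pi/2$, which is exactly the desired bound, and the strictness of the conclusion comes directly from the strictness of the acute-angle hypothesis.

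The step requiring the most care is the first one: confirming that a side really is a single geodesic arc of length below $\pi$, so that it coincides with a genuine side of a spherical triangle to which the cosine rule applies. This rests on the straight-angle property of non-corner vertices together with the fact that the faces are proper (embedded) spherical triangles lying inside the diangle $D$, which prevents a side from wrapping around past $\pi$. Once this reduction is in place, the remaining trigonometric computation is routine, so I expect the geometric reduction rather than the algebra to be the main obstacle.
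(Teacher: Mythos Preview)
Your proof is correct and follows essentially the same route as the paper: both apply the supplementary (polar) spherical law of cosines to conclude $\cos a > 0$ from the positivity of the cosines and sines of the three acute corner angles. Your explicit reduction of a side to a single geodesic arc is a detail the paper leaves implicit.
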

\begin{proof}
Consider any face $F$ of ${\cal S}$. Let $a$ be the length of a particular side of $F$, let $\alpha$ be the angle in the opposite corner of $F$, and let $\beta$ and $\gamma$ be the angles in the other two corners of~$F$. Since $F$ is acute, the sines and cosines of $\alpha, \beta$ and $\gamma$ are all positive. By the supplementary cosine rule (\cite{todhunter}, Art.~47) we have $\cos \alpha = -\cos\beta\cos\gamma + \sin\beta\sin\gamma\cos a$, so $\cos a = (\cos\alpha + \cos\beta\cos\gamma)/(\sin\beta\sin\gamma) > 0$. It follows that $a < \pi/2$.
\end{proof}

\begin{lemma}\label{lem:fourboundary}
There are at least four side vertices: two on each side of $D$.
\end{lemma}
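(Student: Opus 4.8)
The plan is to combine a global length fact about a side of $D$ with the local length bound of Lemma~\ref{lem:shortedges}. First I would recall the geometry of the diangle: the two sides of $D$ are great semicircles joining its pair of antipodal poles, so each side of $D$ is a single geodesic arc of length exactly $\pi$ on the unit sphere (this holds for every lune, irrespective of its opening angle). The two poles are corners of $D$, hence vertices of ${\cal S}$, and they are the endpoints of each side.

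Next, fix one side of $D$ and look at the boundary vertices on it. The side vertices strictly between the two poles subdivide the side into consecutive edges of ${\cal S}$; if there are $k$ side vertices on this side, there are $k+1$ such edges. Since all of them are sub-arcs of one and the same great semicircle, their lengths $\ell_1,\dots,\ell_{k+1}$ partition the side and therefore satisfy $\ell_1 + \cdots + \ell_{k+1} = \pi$.

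Then I would bound each edge individually. Every boundary edge borders exactly one face of ${\cal S}$ on its interior side, which by Lemma~\ref{lem:alltriangles} is a triangle; moreover a single edge runs between two consecutive vertices of that triangle, so it lies within one side of the triangle rather than straddling a corner. Hence by Lemma~\ref{lem:shortedges} each boundary edge has length strictly less than $\pi/2$. Combining the two observations gives $\pi = \ell_1 + \cdots + \ell_{k+1} < (k+1)\cdot \tfrac{\pi}{2}$, which forces $k+1 > 2$, i.e.\ $k \geq 2$. Applying this to each of the two sides of $D$ yields at least two side vertices per side, hence at least four in total.

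The argument is short, so the only points needing care are the two geometric facts it rests on: that a side of $D$ genuinely has length $\pi$, and that every boundary edge is contained in a single triangle side. I expect no substantial obstacle once these are stated cleanly; the slightly delicate part is simply making explicit that the edges along a side are geodesic sub-arcs of a common great semicircle, so that their lengths add up to exactly $\pi$ and the counting bound applies.
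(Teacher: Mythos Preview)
Your proof is correct and follows essentially the same approach as the paper: both arguments combine the fact that each side of $D$ has length~$\pi$ with the bound of Lemma~\ref{lem:shortedges} that each edge (being contained in a triangle side) has length strictly less than~$\pi/2$, forcing at least three edges and hence at least two side vertices on each side. The paper phrases this as a quick contradiction from assuming at most one side vertex, while you carry out the direct summation $\pi = \sum \ell_i < (k+1)\pi/2$, but the content is identical.
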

\begin{proof}
For the sake of contradiction, suppose one side of $D$ contains only one side vertex. Then this side would consist of two edges, at least one of which has length at least $\pi/2$, contradicting Lemma~\ref{lem:shortedges}.
\end{proof}

\begin{lemma}\label{lem:mindegree}
Each pole is incident on at least two edges.\\
Each hanging vertex and each side vertex is incident on at least four edges.\\
Each full vertex is incident on at least five edges.
\end{lemma}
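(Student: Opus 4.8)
The plan is to prove all four bounds by a single angle-counting scheme. The crucial observation is that at any vertex $w$ of ${\cal S}$, each incident face contributes an angle at $w$ that is either strictly less than $\pi/2$ or exactly $\pi$: by Lemma~\ref{lem:alltriangles} every face is an acute triangle, so if $w$ is a corner of the face its contributed angle is acute, i.e.\ below $\pi/2$, whereas if $w$ is a non-corner (straight-angle) vertex of the face its contributed angle is $\pi$. I will combine this with two bookkeeping facts. First, the total angle consumed around $w$ equals $2\pi$ if $w$ is interior, equals $\pi$ if $w$ is a side vertex (its two boundary edges lie on a single great circle, namely a side of $D$), and equals the corner angle of $D$, which is below $\pi/2$ since $D$ is acute, if $w$ is a pole. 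Second, the number of faces incident on $w$ equals the number of incident edges when $w$ is interior, and is one less than that number when $w$ is a boundary vertex.

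The pole bound and the full-vertex bound are then immediate. A pole is the common endpoint of the two sides of $D$, so it is incident on at least one edge along each side and hence on at least two edges. For a full vertex, every incident face has $w$ as a corner, so every contributed angle is below $\pi/2$; since these angles sum to $2\pi$, there must be more than four of them, hence at least five faces and therefore at least five edges.

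The side-vertex and hanging-vertex bounds require excluding a degenerate configuration. For a side vertex the contributed angles sum to $\pi$; a single straight-angle contribution of $\pi$ would already exhaust this total and force $w$ to be the lone non-corner vertex of one face flanked by two collinear boundary edges, i.e.\ a removable point in the interior of a side of that face rather than a genuine vertex of ${\cal S}$. Excluding this, all incident faces have $w$ as a corner, so the contributed angles are each below $\pi/2$ and sum to $\pi$, giving at least three faces and hence at least four edges. For a hanging vertex, $w$ is by definition a non-corner vertex of some face $F$, which therefore contributes $\pi$; the remaining angle around $w$ is $2\pi - \pi = \pi$ and is filled by the faces on the far side of the great circle carrying $F$'s straight side. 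By the same reasoning as for side vertices, these far-side faces all have $w$ as a corner and so number at least three, giving at least four faces and, as $w$ is interior, at least four edges.

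The main point to get right is the treatment of straight-angle contributions. A face contributes exactly $\pi$ at $w$ only when its two edges at $w$ are collinear, and one must verify that such a contribution can occur at a genuine side or hanging vertex only in the degenerate, removable way excluded above; this guarantees that the relevant angle (a full $\pi$) is always distributed over at least three acute corner-faces. Once this is pinned down, the four inequalities follow purely from the arithmetic of summing acute angles to $\pi$ or $2\pi$.
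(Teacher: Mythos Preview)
Your proof is correct and rests on exactly the same idea as the paper's: at any vertex the incident face angles are each strictly below $\pi/2$ (or equal to $\pi$ at a non-corner), and comparing their sum to the available total angle ($2\pi$, $\pi$, or the acute corner angle of $D$) forces the stated minimum number of incident faces, hence edges. The only difference is packaging: the paper argues each case by direct contradiction (assume degree $3$ for a side/hanging vertex, or degree $\le 4$ for a full vertex, and exhibit a non-acute angle), whereas you set up the angle-counting bookkeeping once and read off all four bounds; your version is also a bit more explicit about dismissing the degenerate degree-$2$ ``removable'' configuration, which the paper silently skips by starting its contradiction at degree $3$.
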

\begin{proof}
The poles are incident on at least two edges by definition. If a side vertex or a hanging vertex would be incident on only three edges, then two of these edges make a straight angle on one side, while the third edge divides the straight angle on the other side. Thus, at least one of the angles that results from this division would be non-acute. If a full vertex would be incident on at most four edges, then at least two of those edges must make an angle of at least $2\pi/4 = \pi/2$ on their common face, again contradicting the assumption that all faces are acute.
\end{proof}

Now we can combine these properties with Euler's formula and find:

\begin{lemma}\label{lem:minfaces}
The number of faces equals $2f + h + s$, where $f \geq 2$ is the number of full vertices, $h$ is the number of hanging vertices, and $s \geq 4$ is the number of side vertices. More precisely, $f \geq 2 + p$, where $p$ is the number of edges with one end point at a pole and the other end point at an interior vertex.
\end{lemma}
\begin{proof}
Let $v = f + h + s + 2$ be the number of vertices, let $e$ be the number of edges and $r$ be the number of triangles of ${\cal S}$. By Lemma~\ref{lem:alltriangles} all faces are triangles, so by Euler's formula we have $v + r = f + h + s + 2 + r = e + 1$, hence $2e = 2f + 2h + 2s + 2 + 2r$. We say that a hanging vertex is \emph{owned} by the triangle of which it is a non-corner vertex; each hanging vertex is owned by exactly one triangle. The number of edges on the boundary of a triangle $F$ is three plus the number of hanging vertices owned by $F$. Thus, if we add up the numbers of edges of all triangles, we obtain a total of $3r + h$. This counts all edges double, except the $s + 2$ edges on the boundary of $D$, which are counted only once. Therefore we have $2e - s - 2 = 3r + h$. Hence we have $2e = 3r + h + s + 2 = 2f + 2h + 2s + 2 + 2r$, which solves to $r = 2f + h + s$.

Thus we have $2e = 3r + h + s + 2 = 6f + 4h + 4s + 2$. By Lemma~\ref{lem:mindegree} we also have $2e \geq 5f + 4h + 4s + 4 + p$, so $6f + 4h + 4s + 2 \geq 5f + 4h + 4s + 4 + p$, which solves to $f \geq 2 + p$.

The condition $s \geq 4$ is given by Lemma~\ref{lem:fourboundary}.
\end{proof}

\begin{lemma}\label{lem:thelemma}
The number of faces of ${\cal S}$ is at least ten.
\end{lemma}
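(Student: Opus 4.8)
The plan is to assume for contradiction that $\mathcal{S}$ has exactly eight faces and to show that its combinatorial structure is then rigidly forced, yet inconsistent with Lemma~\ref{lem:mindegree}. By Lemma~\ref{lem:minfaces} the number of faces is $2f + h + s$ with $f \ge 2$, $h \ge 0$ and $s \ge 4$, whose minimum value is $2\cdot 2 + 0 + 4 = 8$; so the assumption forces $f = 2$, $h = 0$ and $s = 4$ exactly. First I would pin down all the parameters. There are $v = f + h + s + 2 = 8$ vertices and $r = 8$ triangles, so Euler's formula gives $e = v + r - 1 = 15$ edges, and the handshake sum is $2e = 30$. This matches exactly the lower bound $2\cdot 2 + 4\cdot 4 + 5\cdot 2 = 30$ from Lemma~\ref{lem:mindegree} (two poles of degree $\ge 2$, four side vertices of degree $\ge 4$, two full vertices of degree $\ge 5$). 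Hence every degree bound is tight: each pole has degree exactly $2$, each side vertex exactly $4$, and each full vertex exactly $5$.

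Next I would read off the forced shape. Since $s = 4$ and each side of $D$ carries at least two side vertices (Lemma~\ref{lem:fourboundary}), each side carries exactly two; call them $s_1, s_2$ on one side and $t_1, t_2$ on the other, indexed from pole $P_1$ to pole $P_2$. Because $P_1$ has degree $2$, its only edges are the two boundary edges to its neighbours $s_1$ and $t_1$, so exactly one face sits at $P_1$. By Lemma~\ref{lem:alltriangles} this face is a triangle; its corners must be $P_1$ together with the side vertices $s_1$ and $t_1$ (which are corners of every incident face), and since $h = 0$ the side opposite $P_1$ carries no further vertex. Thus that side is the single edge $s_1 t_1$ and the face is the triangle $P_1 s_1 t_1$; symmetrically $P_2 s_2 t_2$ is a triangle with base $s_2 t_2$. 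Deleting these two triangles leaves a quadrilateral region $Q$ bounded by the edges $s_1 t_1$, $s_1 s_2$, $s_2 t_2$, $t_1 t_2$, whose interior contains exactly the two full vertices $A, B$ and which is triangulated by the remaining six faces.

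The contradiction then comes from a handshake count inside $Q$. Euler's formula applied to $Q$ (six vertices, six triangles, seven faces counting the outer one) gives $11$ edges, of which $4$ lie on $\partial Q$ and $7$ are interior to $Q$. Each of the four boundary vertices of $Q$ spends two edges on $\partial Q$, so it has exactly one interior edge. Let $d$ be the number of interior edges joining two boundary vertices of $Q$, let $p$ be the number joining a boundary vertex to $A$ or $B$, and let $m$ be the number joining $A$ to $B$. Then $d + p + m = 7$, and counting the single interior-edge slot at each of the four boundary vertices gives $2d + p = 4$. Subtracting yields $m = 3 + d \ge 3$. But by Lemma~\ref{lem:shortedges} every edge has length $< \pi/2$, so any two vertices are joined by at most one edge and $m \le 1$ — a contradiction. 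Equivalently, $m \le 1$ forces $\deg A + \deg B = p + 2m \le 6$, so one of the full vertices has degree at most $3 < 5$, directly contradicting Lemma~\ref{lem:mindegree}. This rules out eight faces, so $\mathcal{S}$ has at least nine.

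The delicate part is not the arithmetic but establishing that the structure is genuinely forced: I expect the main work to be the argument that the two pole faces must be the triangles $P_1 s_1 t_1$ and $P_2 s_2 t_2$ with single-edge bases (using degree-$2$ poles, the absence of hanging vertices, and the fact that full and side vertices are corners of all incident faces), and the justification — via Lemma~\ref{lem:shortedges} — that no two vertices are joined by more than one edge, which is what converts the count $m \ge 3$ into an impossibility. Once these points are secured, the remaining counting is routine.
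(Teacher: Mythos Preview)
Your argument is correct and reaches the same contradiction with Lemma~\ref{lem:mindegree} as the paper, but by a genuinely different route. The paper, after deducing $f=2$, $h=0$, $s=4$, argues directly that the two full vertices, each incident on at least five faces, must share at least two of them; it then appeals to Figure~\ref{fig:hopeless} to assert that the resulting edge network is unique and leaves four boundary vertices with only one interior edge (hence degree three), at least two of which are side vertices. Your approach instead first pins down all degrees exactly via the handshake identity $2e=30$, then uses the degree-$2$ poles to peel off the two corner triangles $P_1s_1t_1$ and $P_2s_2t_2$, and finishes with an edge count in the residual quadrilateral that forces $m\ge 3$ edges between the two full vertices, contradicting the simplicity of the edge graph that follows from Lemma~\ref{lem:shortedges}. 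What your version buys is that it avoids the reliance on a picture to justify uniqueness of the configuration, and it makes explicit (through Lemma~\ref{lem:shortedges} and the fact that an acute lune contains no interior antipodal pair) why multi-edges cannot occur --- a point the paper's proof leaves implicit. What the paper's version buys is brevity: once one accepts Figure~\ref{fig:hopeless}, the contradiction is immediate without the secondary Euler count in $Q$. One very small presentational point: you open by assuming \emph{exactly} eight faces; it would be cleaner to note first that Lemma~\ref{lem:minfaces} already gives $r\ge 8$, so that ``at most eight'' and ``exactly eight'' coincide.
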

\begin{proof}
Suppose, for the sake of contradiction, that ${\cal S}$ has at most nine faces. Then, by Lemma~\ref{lem:minfaces}, we cannot have $f \geq 3$, so $f = 2$, that is, there are exactly two full vertices, which we denote by $X$ and $Y$. Let $N_X$ and $N_Y$ be the vertices adjacent to $X$ and $Y$, respectively. Moreover, by Lemma~\ref{lem:minfaces}, we have $p = 0$ (so neither $N_X$ nor $N_Y$ contains any of the poles), $f + s + h \leq 7$ (there are at most seven vertices other than the poles), and in particular $h \leq 1$ (there is at most one interior vertex other than $X$ and $Y$).

By Lemma~\ref{lem:mindegree}, we have $|N_X| \geq 5$ and $|N_Y| \geq 5$. Since $|N_X \cup N_Y| \leq f + s + h \leq 7$, this implies $|N_X \cap N_Y| \geq 3$, that is, $X$ and $Y$ share at least three neighbours. Since there is at most one interior vertex other than $X$ and $Y$, at least two of the shared neighbours must be side vertices, let us call these $V$ and $W$. Now consider the cycle $XVYWX$. This cycle separates the diangle $D$ into three regions---see Figure~\ref{fig:hopeless}a. Only one of these regions, namely the interior of the cycle, has both $X$ and $Y$ on its boundary. Therefore, the third shared neighbour of $X$ and $Y$ must be an interior vertex $U$ inside the cycle $XVYWX$. Since there are no other interior vertices, $U$ can have edges only to $X$, $V$, $Y$, and $W$. By Lemma~\ref{lem:mindegree}, all four of these edges must exist.

\begin{figure}
\centering\hbox to\hsize{%
a)~\includegraphics[scale=0.8,page=2]{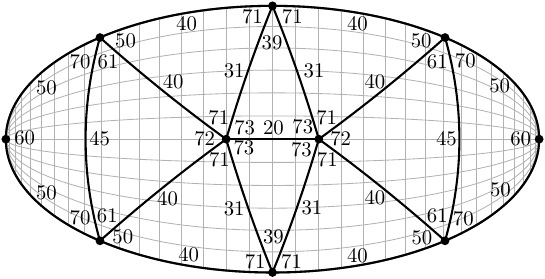}\hfill
b)~\includegraphics[scale=0.8,page=1]{hopeless.pdf}%
}
\caption{
a) If ${\cal S}$ has at most nine faces, two full vertices and two side vertices must form a quadrilateral with a hanging vertex inside, but we cannot get the angles around the hanging vertex right.\quad b) An example of a subdivision of an acute diangle into ten acute triangles, with angles and edge lengths given in degrees rounded to integers.}
\label{fig:hopeless}
\end{figure}

Now, since $U$ is a hanging vertex, one of the angles around $U$ must be a straight angle, say, without loss of generality, the angle $XUV$. But then the cycle $XUVX$ would constitute a geometric diangle, contradicting Lemma~\ref{lem:alltriangles}. Therefore, the assumption that ${\cal S}$ has at most nine faces must be wrong.
\end{proof}

This concludes the proof of Theorem~\ref{thm:main}.

Note that the crucial observation of the proof is that any dissection of an acute spherical diangle into acute spherical triangles requires at least ten triangles. This bound is, in general, tight: Figure~\ref{fig:hopeless}b shows a dissection of an acute spherical diangle into ten acute spherical triangles.

\section*{Bold conjectures}

Could we exploit the observations in this paper to improve Corollary~\ref{cor:reptile} further? A first step could be the following. If $T$ is an acute tetrahedron with diameter $d$, and we can identify three segments $x$, $y$ and $z$ of length $d/3$ on the edges of $T$, then the arguments presented in this paper tell us that any 27-reptiling of $T$ must contain at least ten tiles that intersect $x$, ten tiles that intersect $y$, and ten tiles that intersect $z$. If additionally, one can ensure that $x$, $y$ and $z$ lie at distance more than $d/3$ from each other, these three sets of ten tiles each must be mutually disjoint, so there must be at least 30 tiles in total. This would contradict the existence of a 27-reptiling and thus improve Corollary~\ref{cor:reptile} to: no acute tetrahedron is an $r$-reptile for any $r < 64$.

However, given that no acute tetrahedron can be an $r$-reptile or $r$-gentile for small values of $r$ (and given, in general, that the past hundred years did not turn up any reptile tetrahedra without right dihedral angles), we may rather restate the obvious:

\begin{conjecture}
There are no reptile acute tetrahedra.
\end{conjecture}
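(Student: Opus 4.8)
The plan is to attack the conjecture through the same two ingredients that drive the rest of the paper, now pushed to their limit, while being honest about where they break down. First I would invoke the theorem of Matou\v sek and Safernov\'a to reduce to the case $r = k^3$ for an integer $k \geq 2$: in a reptiling every tile is a congruent copy of $T$ scaled by the linear factor $1/k$, so each tile has diameter $d/k$, where $d$ is the diameter of $T$. Since $d/k < d$, this is precisely the setting of Theorem~\ref{thm:main}, and it is the setting in which both of the obstructions below live.

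The first ingredient is the local, combinatorial obstruction. At every vertex $p$ of the reptiling lying in the interior of an edge of $T$, the induced subdivision $\mathcal{S}_p$ is a valid subdivision of an acute diangle, so by Lemma~\ref{lem:thelemma} at least nine (conjecturally at least $b=10$) tiles are incident on $p$. Because each edge of $T$ is covered by tile edges of length less than $d/k$, every edge carries $\Omega(k)$ such interior vertices; selecting a maximal subset of them that are pairwise at distance more than $d/k$ forces the families of incident tiles to be disjoint, since a tile incident on $p$ has all its points within $d/k$ of $p$. This generalises the three-segment sketch of the preceding section to arbitrary $k$.

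The second ingredient is a global scaling obstruction via the Dehn invariant $D$. Since $D$ is additive over dissections and scales linearly with lengths while fixing dihedral angles, a $k^3$-reptiling yields $D(T) = \sum_{i=1}^{k^3} D(T_i) = k^3 \cdot \tfrac{1}{k} D(T) = k^2 D(T)$, so $(k^2-1)D(T) = 0$ and therefore $D(T) = 0$. Thus every reptile tetrahedron, acute or not, has vanishing Dehn invariant, and the conjecture would follow at once from the purely metric statement that \emph{no acute tetrahedron has zero Dehn invariant} (for instance, the regular tetrahedron, being acute, already has $D \neq 0$ because $\arccos\tfrac{1}{3}$ is not a rational multiple of $\pi$).

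I expect the main obstacle to be that neither ingredient closes the gap alone, for essentially opposite reasons. The combinatorial bound is linear in $k$: the total edge length of $T$ is $O(d)$ while tiles have size $d/k$, so the well-separated vertices yield only $O(k)$ disjoint families and a lower bound on the number of tiles that grows only linearly in $k$, which cannot contradict the true count $k^3$ once $k$ is large; rescuing it would require a genuinely new, cubically scaling global invariant, or a self-similar induction exploiting the recursive structure of the reptiling. The Dehn ingredient, by contrast, reduces everything to one clean question, but that question is a hard Diophantine one: whether the six dihedral angles of an acute tetrahedron can be $\mathbb{Q}$-linearly dependent with $\pi$ in the balanced way that kills $D$. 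Settling this last point — or exhibiting an acute tetrahedron with $D = 0$ and then excluding it by finer geometric means — is, I believe, where the real difficulty lies.
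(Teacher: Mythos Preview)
The statement you chose is presented in the paper as an open \emph{conjecture}; the paper offers no proof of it. There is therefore no paper proof to compare your attempt against.

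Your proposal is not a proof either, and you say so yourself. Both ingredients are correct as far as they go. The local counting argument is exactly the engine behind Theorem~\ref{thm:main}, and your diagnosis that it yields only a lower bound linear in $k$ (hence cannot contradict a tile count of $k^3$ for large $k$) is accurate. The Dehn-invariant reduction is also valid and classical: additivity over dissections together with the scaling $D(\lambda T)=\lambda\,D(T)$ gives $(k^{2}-1)\,D(T)=0$, and since $\mathbb{R}\otimes_{\mathbb{Z}}(\mathbb{R}/\pi\mathbb{Z})$ is a $\mathbb{Q}$-vector space and hence torsion-free, one concludes $D(T)=0$ for every reptile tetrahedron. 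The genuine gap is precisely the one you name at the end: you have reduced the conjecture to the assertion that no acute tetrahedron has vanishing Dehn invariant, and that assertion is neither proved nor obviously true --- for instance, any tetrahedron with all dihedral angles in $\pi\mathbb{Q}$ automatically has $D=0$, and whether such a tetrahedron can be acute is itself a delicate Diophantine question. So what you have written is an honest survey of two natural attacks and their limitations, not a proof; in that respect it is consistent with the status the paper itself assigns to the statement.
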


A stronger conjecture would be:

\begin{conjecture}
There are no gentile acute tetrahedra.
\end{conjecture}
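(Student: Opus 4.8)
Although this last conjecture appears genuinely difficult, let me sketch a possible line of attack. The plan is to upgrade the purely local obstruction behind Theorem~\ref{thm:main} into a global, scale-invariant one. First I would record the two reductions that make the spherical machinery applicable with no a priori bound on $r$: in any gentiling of an acute tetrahedron $T$ every tile is similar to $T$ and hence acute (Eppstein et al.~\cite{ungor}), and no tile can be congruent to $T$, since a congruent copy contained in the convex body $T$ would have equal volume and so fill it, forcing $r=1$. Consequently every tile has strictly smaller diameter than $T$, the hypothesis of Theorem~\ref{thm:main} holds automatically, and the link ${\cal S}_v$ of every tile-vertex $v$ lying in the interior of the longest edge $E$ of $T$ is a valid subdivision of a diangle, so Lemma~\ref{lem:thelemma} applies at each such $v$. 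This already reproves Corollary~\ref{cor:gentile}; the trouble is that it only ever yields lower bounds on $r$, whereas the conjecture asserts impossibility for arbitrarily large $r$.

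Second, I would pass from a single vertex to the whole of $E=AB$ and set up an inductive, ``fault-line'' analysis along it. The tiles meeting $E$ cut it into finitely many subsegments; walking from $A$ to $B$, the link ${\cal S}_v$ is locally constant and changes only when a tile face crosses $E$, so the gentiling induces a finite sequence of diangle-subdivisions glued along shared triangles. Because each tile is itself similar to $T$, it carries a scaled copy of the same gentiling, which should let me refine this sequence and encode the combinatorics along $E$ as a substitution (transfer-matrix) system over a finite set of edge-and-link types. The aim is to extract from acuteness a strict monovariant --- for instance a suitably weighted count of tiles incident to $E$, or an algebraic relation among the edge-length ratios forced by iterated similarity --- that cannot be reconciled with self-similar refinement, yielding an infinite descent: from any acute gentiling one would manufacture a strictly smaller admissible one, which is impossible.

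The main obstacle, I expect, is precisely this local-to-global bridge. The spherical argument is confined to the infinitesimal neighbourhood of one vertex, and the acuteness constraints imposed at different vertices interact only weakly, so a gentiling with large $r$ enjoys enormous global freedom and a mere summation of per-vertex bounds cannot close the gap. The real work is to isolate a genuinely scale-free invariant of acute gentilings and prove that the similarity structure strictly decreases or outright violates it; failing that, one would instead have to rule out every admissible substitution matrix over the finite type set, a potentially large case analysis. This is where I anticipate the argument will stand or fall, and it seems plausible that settling even Conjecture~\ref{con:ten} is a necessary first step toward controlling the per-link contribution finely enough for the global count to bite.
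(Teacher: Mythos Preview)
The statement you address is presented in the paper as an open conjecture; the paper offers no proof of it, only the heuristic evidence of Theorem~\ref{thm:main} and Corollaries~\ref{cor:gentile}--\ref{cor:reptile}. There is therefore no paper proof to compare your proposal against.

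Taken on its own terms, your proposal is not a proof, and you acknowledge as much: it is a research programme. The reductions in your first paragraph are correct --- similarity preserves dihedral angles so every tile is acute, a tile congruent to $T$ would have the same volume and fill $T$, hence every tile has strictly smaller diameter and Theorem~\ref{thm:main} applies --- but, as you say, this only reproves Corollary~\ref{cor:gentile} and gives no obstruction for large $r$. The genuine gap sits exactly where you locate it: in the second paragraph you \emph{postulate} a transfer-matrix encoding of the link combinatorics along the longest edge and a strict monovariant under self-similar refinement, but you construct neither object and prove nothing about either. Nothing in the spherical lemmas (Lemmas~\ref{lem:shortedges}--\ref{lem:minfaces}) hints at a quantity that self-similarity would strictly decrease, and your own remark that large-$r$ gentilings enjoy ``enormous global freedom'' is precisely why a summation of per-vertex bounds cannot be expected to close the gap. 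Until the promised monovariant is actually exhibited and its strict decrease established, what you have is a plan of attack, not a proof.
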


We conclude with an even bolder conjecture:

\begin{conjecture}
There are no gentile tetrahedra that do not have a dihedral angle of exactly $\pi/2$.
\end{conjecture}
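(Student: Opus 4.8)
The plan is to pair an algebraic invariant that every gentiling must preserve with the spherical-link analysis already developed for Theorem~\ref{thm:main}.

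First I would show that a gentiling forces the Dehn invariant of $T$ to vanish. The Dehn invariant $D(T)=\sum_e \ell_e \otimes \theta_e \in \mathbb{R}\otimes_{\mathbb{Z}}(\mathbb{R}/\pi\mathbb{Q})$ is additive under dissection, invariant under isometry, and scales linearly in the $\mathbb{R}$-factor: a tile $T_i$ that is an isometric copy of $T$ scaled by $\lambda_i$ has every edge multiplied by $\lambda_i$ but every dihedral angle unchanged, so $D(T_i)=\lambda_i D(T)$. Additivity over an $r$-gentiling then gives $D(T)=\sum_i D(T_i)=\left(\sum_i \lambda_i\right)D(T)$. Comparing volumes yields $\sum_i \lambda_i^3 = 1$, and since each $\lambda_i\in(0,1)$ and $r\geq 2$ we get $\sum_i \lambda_i > \sum_i \lambda_i^3 = 1$. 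Thus the real number $\sum_i\lambda_i-1$ is nonzero and annihilates $D(T)$ in the $\mathbb{R}$-vector space $\mathbb{R}\otimes_{\mathbb{Z}}(\mathbb{R}/\pi\mathbb{Q})$, so $D(T)=0$; equivalently, $T$ is scissors-congruent to a cube (Sydler). This is the clean half of the argument, and it already eliminates all but a thin family of tetrahedra.

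It does not, however, produce a right dihedral angle on its own: any tetrahedron whose six dihedral angles are rational multiples of $\pi$ satisfies $D(T)=0$ automatically, since then each $\theta_e\equiv 0$ in $\mathbb{R}/\pi\mathbb{Q}$. To finish I would feed the surviving shapes into the link analysis. At a point $v$ on an edge $e$ of $T$, the link ${\cal S}_v$ is a diangle of angle $\theta_e$ dissected into spherical triangles and diangles, and because every tile is similar to $T$, every angle occurring is one of the six dihedral angles $\theta_1,\dots,\theta_6$ of $T$. The angle-sum condition at each link vertex then writes $\pi$ (at side and hanging vertices), $2\pi$ (at full vertices), or $\theta_e$ (at the poles) as a nonnegative integer combination of $\theta_1,\dots,\theta_6$; together with the spherical cosine rule constraining each triangular face and the analogous planar relations on the facets of $T$, these assemble into a rigid Diophantine--trigonometric system that any gentiling must solve.

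The main obstacle is to prove this system infeasible whenever no $\theta_i$ equals $\pi/2$. The Dehn condition reduces matters to tetrahedra carrying a $\mathbb{Q}$-linear dependence among the $\theta_e \bmod \pi$, with the all-rational-angle tetrahedra as the extreme case; for those one could invoke the recent complete classification of tetrahedra whose dihedral angles are all rational multiples of $\pi$ and check the finite list directly. The genuinely hard part is everything else: without acuteness, Lemmas~\ref{lem:shortedges}--\ref{lem:mindegree} break down, so the degree and edge-length bounds that forced at least nine faces disappear and the admissible links become far more varied. I expect the crux to be a structural closure argument showing that no labelling of the triangular faces by the $\theta_i$ can simultaneously satisfy all the angle-sum equations, realize every face as a genuine spherical triangle, and tile a diangle of angle $\theta_e$ — carried out uniformly over every tetrahedron with $D(T)=0$ and no right dihedral angle. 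Achieving that exhaustiveness, rather than the Dehn reduction, is where the real difficulty lies.
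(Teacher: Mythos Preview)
The paper offers no proof of this statement: it is explicitly posed as a conjecture, the last and boldest of several open problems in the closing section, and the author makes no attempt to resolve it. There is therefore no argument of the paper's to compare your proposal against.

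Your proposal is likewise not a proof, and to your credit you say so. The Dehn-invariant step is correct and clean: additivity under dissection together with the linear scaling $D(\lambda T)=\lambda\,D(T)$ does force $\bigl(\sum_i\lambda_i-1\bigr)D(T)=0$, and your inequality $\sum_i\lambda_i>\sum_i\lambda_i^3=1$ holds since each $\lambda_i\in(0,1)$; hence $D(T)=0$ in the $\mathbb{R}$-vector space $\mathbb{R}\otimes_{\mathbb{Z}}(\mathbb{R}/\pi\mathbb{Q})$. That is a genuine reduction. But, as you explicitly concede, vanishing Dehn invariant does not by itself produce a right dihedral angle, and the remainder of your plan is a description of what a proof would have to accomplish rather than an argument: once acuteness is dropped, Lemmas~\ref{lem:shortedges}--\ref{lem:mindegree} no longer apply, the link combinatorics become essentially unconstrained, and your ``structural closure argument showing that no labelling \dots\ can simultaneously satisfy all the angle-sum equations'' is left entirely unspecified. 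Even in the all-rational-angle case you defer to an external classification without verifying that none of the listed tetrahedra (lacking a right angle) is a gentile. In short, you have correctly identified a useful first move and correctly identified that the conjecture remains open after it; what you have is a research programme, not a proof, and the paper is in the same position.
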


\paragraph{Acknowledgements}
The author thanks the anonymous reviewer who improved Lemma~\ref{lem:minfaces} and gave me a hint on how to use this to improve Lemma~\ref{lem:thelemma}. This raised the bound in Lemma~\ref{lem:thelemma} and thus, many other bounds in this paper, from nine to ten.

\bibliographystyle{amsplain}
\providecommand{\bysame}{\leavevmode\hbox to3em{\hrulefill}\thinspace}
\providecommand{\MR}{\relax\ifhmode\unskip\space\fi MR }
\providecommand{\MRhref}[2]{%
  \href{http://www.ams.org/mathscinet-getitem?mr=#1}{#2}
}
\providecommand{\href}[2]{#2}

\end{document}